\theoremstyle{plain}
\newtheorem{proposition}{Proposition}
\newtheorem{theorem}{Theorem}
\newtheorem*{claim*}{Claim}
\theoremstyle{definition}
\newcommand{\GF}{\mathrm{GF}}
\newcommand{\Sym}{\mathrm{Sym}}
\newcommand{\Alt}{\mathrm{Alt}}
\newcommand{\GL}{\mathrm{GL}}
\newcommand{\SL}{\mathrm{SL}}
\newcommand{\PSL}{\mathrm{PSL}}
\begin{document}

\title{Computing in matrix groups without memory}
\author{Peter J. Cameron\footnote{School of Mathematical Sciences, Queen Mary, University of London,  Mile End Road, London E1 4NS, UK and
	School of Mathematics and Statistics, University of St Andrews,	Mathematical Institute,	North Haugh, St Andrews, Fife KY16 9SS, UK.
   email: p.j.cameron@qmul.ac.uk, pjc@mcs.st-andrews.ac.uk},~Ben Fairbairn\footnote{Department of Economics, Mathematics and Statistics, Birkbeck, University of London, Malet Street, London WC1E 7HX, UK.	email: bfairbairn@ems.bbk.ac.uk}~and Maximilien Gadouleau\footnote{School of Engineering and Computing Sciences, Durham University, South Road, Durham, DH1 3LE, UK. email:m.r.gadouleau@durham.ac.uk}}

\maketitle

\begin{abstract}
Memoryless computation is a novel means of computing any function of a set of registers by updating one register at a time while using no memory. We aim to emulate how computations are performed on modern cores, since they typically involve updates of single registers. The computation model of memoryless computation can be fully expressed in terms of transformation semigroups, or in the case of bijective functions, permutation groups. In this paper, we view registers as elements of a finite field and we compute linear permutations without memory. We first determine the maximum complexity of a linear function when only linear instructions are allowed. We also determine which linear functions are hardest to compute when the field in question is the binary field and the number of registers is even. Secondly, we investigate some matrix groups, thus showing that the special linear group is internally computable but not fast. Thirdly, we determine the smallest set of instructions required to generate the special and general linear groups. These results are important for memoryless computation, for they show that linear functions can be computed very fast or that very few instructions are needed to compute any linear function. They thus indicate new advantages of using memoryless computation.
\end{abstract}

{\bf AMS Subject Classification}: 20G40 (primary), 68Q10, 20B05, 20F05 (secondary)

\section{Introduction}

\subsection{Memoryless computation}

Typically, swapping the contents of two variables $x$ and $y$ requires a buffer $t$, and proceeds as follows (using pseudo-code):
\begin{eqnarray*}
    t &\gets& x\\
    x &\gets& y\\
    y &\gets& t.
\end{eqnarray*}
However, the famous XOR swap (when $x$ and $y$ are sequences of bits), which we view in general as addition over a vector space:
\begin{eqnarray*}
    x &\gets& x+y\\
    y &\gets& x-y\\
    x &\gets& x-y,
\end{eqnarray*}
performs the swap without any use of memory. 

While the example described above is folklore in Computer Science, the idea to compute functions without memory was developed in \cite{Bur96,Bur04,BGT09,BM00,BM04a,BM04} and then independently rediscovered and developed in \cite{GR11a}. Amongst the results derived in the literature is the non-trivial fact that any function can be computed using memoryless computation. Moreover, only a number of updates linear in the number of registers is needed: any function of $n$ variables can be computed in at most $4n-3$ updates (a result proved for the Boolean alphabet in \cite{BGT09}, then independently extended to any alphabet in \cite{GR11a} and \cite{BGT13}), which reduces to $2n-1$ if the function is bijective. Memoryless computation has the potential to speed up computations not only by avoiding time-consuming communication with the memory but also by effectively combining the values contained in  registers. This indicates that memoryless computation can be viewed as an analogue in computing to network coding \cite{ACLY00,YLCZ06}, an alternative to routing on networks. It is then shown in \cite{GR11a} that for certain manipulations of registers, memoryless computation uses arbitrarily fewer updates than traditional, ``black-box'' computing. 

\subsection{Model for computing in matrix groups without memory}

In this paper, we are interested in computing linear bijective functions without memory. Some results already appear in \cite{GR11a} about these functions. For instance, any linear function can be computed in at most $2n-1$ updates; in this paper, we lower that upper bound to $\lfloor 3n/2\rfloor$, which is tight. The number of updates required to compute any manipulation of variables is also determined in \cite[Theorem 4.7]{GR11a}. 

Foremost, let us recall some notations and results from \cite{GR11a}. Let $A:= \GF(q)$ be a finite field (the {\em alphabet}) and $n \ge 2$ be an integer representing the number of registers (also called variables) $x_1,\ldots,x_n$. We denote $[n] = \{1,2,\ldots,n\}$. The elements of $A^n$ are referred to as {\em states}, and any state $a \in A^n$ is expressed as $a= (a_1,\ldots,a_n)$. For any $1 \le k \le n$, the $k$-th unit state is given by $e^k = (0,\ldots,0,1,0,\ldots,0)$ where the $1$ appears in coordinate $k$. We also denote the all-zero state as $e^0$.

For any $f \in \Sym(A^n)$, we denote its $n$ coordinate functions as $f_1,\ldots,f_n : A^n \to A$, i.e.
$f(x) = (f_1(x), \ldots,f_n(x))$
for all $x = (x_1,\ldots,x_n) \in A^n$. We say that the $i$-th coordinate function is {\em trivial} if it coincides with that of the identity: $f_i(x) = x_i$; it is nontrivial otherwise.

A bijective {\em instruction} is a permutation $g$ of $A^n$ with one nontrivial coordinate function:
$$
    g(x) = (x_1,\ldots,x_{j-1},g_j(x),x_{j+1},\ldots,x_n)
$$
for some $1 \le j \le n$. We say the instruction $g$ {\em updates} the $j$-th coordinate. We can represent this instruction as
$$
    y_j \gets g_j(y)
$$
where $y = (y_1,\ldots,y_n) \in A^n$ represents the contents of the registers. A {\em program} computing $f$ is simply a sequence of instructions whose combination is $f$; the instructions are typically denoted one after the other.

With this notation, the swap of two variables can be viewed as computing the permutation $f$ of $A^2$ defined as $f(x_1,x_2) = (x_2,x_1)$, and the program is given by
\begin{eqnarray*}
    y_1 &\gets& y_1 + y_2 \qquad (= x_1 + x_2)\\
    y_2 &\gets& y_1 - y_2 \qquad (= x_1)\\
    y_1 &\gets& y_1 - y_2 \qquad (= x_2).
\end{eqnarray*}

In this paper, we want to compute a linear transformation $f: A^n \to A^n$, i.e.
$$
    f(x) = xM^\top
$$
for some matrix $M \in A^{n \times n}$. We denote the rows of $M$ as $f_i$. We restrict ourselves to linear instructions only, i.e. instructions of the form
$$
    y_i \gets v \cdot y = \sum_{j=1}^n v_j y_j,
$$
for some $v = (v_1,\ldots, v_n) \in A^n$. In particular, the instruction above is a permutation if and only if $v_i \ne 0$. Note that computing $f$ without memory is then equivalent to computing $M$ by starting from the identity matrix and updating one row at a time.

The set $\mathcal{M}(\GF(q)^n)$ of bijective linear instructions then corresponds to the set of nonsingular matrices with at most one nontrivial row:
$\mathcal{M} = \{S(i,v) : 1 \le i \le n, v \in A^n(i)\},$
where
\begin{eqnarray*}
    A^n(i) &=& \{v \in A^n, v_i \neq 0\} \,\mbox{for all}\, 1 \le i \le n,\\
    S(i,v) &=& \left(\begin{array}{c|c|c}
        I_{i-1} & \multicolumn{2}{c}{0}\\
        \hline
        \multicolumn{3}{c}{v}\\
        \hline
        \multicolumn{2}{c|}{0} & I_{n-i}
    \end{array}\right) \in A^{n \times n}.
\end{eqnarray*}
We remark that $S(i,v)^{-1} = S(i,-v_i^{-1}v)$ for all $i,v$.

Following \cite{CFG12}, we say a group is {\em internally computable} if it can be generated by its instructions, i.e. if any element of the group can be computed by a program using instructions from $G$. For instance, Gaussian elimination proves that $\GL(n,q)$ is internally computable. We prove in Proposition \ref{prop:SL} that $\SL(n,q)$ is also internally computable. For any internally computable group $G$, two main problems arise. First, we want to know how fast we can compute any element of $G$: we will prove that the maximum complexity in the general linear group is $\lfloor 3n/2 \rfloor$ instructions in Theorem \ref{th:diameter_GL}. More surprisingly, if $q=2$ and $n$ is even, then the matrices requiring $3n/2$ instructions are fully characterised in Proposition \ref{prop:GL(2m,2)}. Note that the average complexity over all elements of a group is also interesting; for $\GL(n,q)$, this quantity tends to $n$ instructions when $q$ is large \cite{GR11a}. 

Secondly, due to the large number of possible instructions, it seems preferable to work with restricted sets of instructions which could be efficiently used by a processor. Therefore, we also want to know the minimum number of instructions required to generate the whole group. We shall determine this for the special and general linear groups in Theorems \ref{th:SL} and \ref{th:GL}, respectively. The fact that it is equal to $n$ in most cases--and $n+1$ otherwise--shows how easy it is to compute linear functions without memory and how little space would be required to store those minimal sets of instructions. 

For any internally computable group $G$ and any $g \in G$, we denote the shortest length of a program computing $g$ using only instructions from $G$ as $\mathcal{L}(g,G)$; we refer to this quantity as the {\em complexity} of $g$ in $G$. If $H \le G$ and $\mathcal{L}(h,H) = \mathcal{L}(h,G)$ for all $h \in H$, we say that $H$ is {\em fast} in $G$. It is still unknown whether $\GL(n,q)$ is fast in $\Sym(\GF(q)^n)$, i.e. if we cannot compute linear functions any faster by allowing non-linear instructions. However, we will prove in Proposition \ref{prop:SL} that the special linear group is not fast in the general linear group (unless $q=2$).

We would like to emphasize that we only consider bijective linear functions, i.e. computing in matrix groups. The case of any bijective function is studied in \cite{CFG12}, where analogue results are derived for the symmetric and alternating groups of $A^n$ ($A$ being any finite set of cardinality at least $2$).

The rest of the paper is organised as follows. In Section \ref{sec:complexity}, we determine the maximum complexity of any matrix in $\GL(n,q)$ and investigate which matrices have highest complexity. Then, in Section \ref{sec:matrix_groups}, we determine whether some matrix groups are internally computable, and we show that $\SL(n,q)$ is internally computable but not fast in $\GL(n,q)$. Finally, in Section \ref{sec:generating}, we determine the minimum size of a generating set of instructions for both the special and general linear groups.

\section{Maximum complexity in the general linear group} \label{sec:complexity}

\begin{theorem} \label{th:diameter_GL}
Any matrix in $\GF(q)^{n \times n}$ can be computed in at most $\lfloor 3n/2 \rfloor$ linear instructions. This bound is tight and reached for some matrices in $\GL(n,q)$.
\end{theorem}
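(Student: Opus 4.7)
The plan is to prove the upper bound by induction on $n$ and to establish tightness by exhibiting an explicit worst-case matrix; since the tight examples lie in $\GL(n,q)$, I focus on invertible $M$, remarking that singular $M$ can be treated by a similar but easier argument using instructions with $v_i = 0$ allowed. The base cases $n=1$ (trivial) and $n=2$ (a direct case analysis on whether $m_{11}=0$, giving $2$ or $3$ instructions, the latter in the style of the XOR swap) are handled first. For the inductive step with $n \geq 3$ and target $M \in \GL(n,q)$, I would select two indices $i_1, i_2 \in [n]$ for which the principal $2 \times 2$ submatrix $M[\{i_1,i_2\},\{i_1,i_2\}]$ is invertible, then exhibit a sequence of three instructions---updating rows $i_1,i_2,i_1$ in turn, with coefficients determined by $m_{i_1}$ and $m_{i_2}$---that installs $m_{i_1},m_{i_2}$ as rows $i_1,i_2$ while leaving the other rows equal to $e_j$ for $j \notin \{i_1,i_2\}$. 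It then remains to turn those $n-2$ rows into the corresponding $m_j$ without further touching rows $i_1,i_2$. Reducing modulo $\operatorname{span}(m_{i_1},m_{i_2})$ converts this into an $(n-2)$-dimensional problem in $\GL(n-2,q)$, which the inductive hypothesis handles in at most $\lfloor 3(n-2)/2 \rfloor$ instructions; each lifts to a full $n$-dimensional instruction by adjusting the coefficients on rows $i_1,i_2$ so that the entries in columns $i_1,i_2$ come out correctly. The total is $3 + \lfloor 3(n-2)/2 \rfloor = \lfloor 3n/2 \rfloor$.

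The hardest part will be ensuring this scheme works for every $M$. Two issues need care. First, a pair $(i_1, i_2)$ with invertible principal submatrix might not exist---the $3$-cycle permutation matrix on $n = 3$ has all three $2 \times 2$ principal submatrices singular---so I would either fall back on an alternative $3$-instruction preparation that routes through a third row, or treat small $n$ as additional base cases. Second, a row $j \notin \{i_1, i_2\}$ with $m_j \neq e_j$ could coincide with $e_j$ modulo $\operatorname{span}(m_{i_1},m_{i_2})$, in which case the inductive subprogram might skip it and leave its entries in columns $i_1,i_2$ incorrect; I would manage this either by choosing $(i_1,i_2)$ so that no such ``ghost matches'' occur or by absorbing the extra lift instruction into slack in the inductive bound.

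For tightness, I would exhibit a permutation matrix $M^*$ whose cycle structure maximizes the complexity: $n/2$ disjoint transpositions when $n$ is even, and $(n-3)/2$ transpositions together with a single $3$-cycle when $n$ is odd. The lower bound generalizes the $n = 2$ XOR-swap argument. For a transposition on indices $\{i,j\}$ in the target, the row $m_i = e_j$ has zero in position $i$, so if row $i$ were updated only once the current $r_j$ would have to differ from $e_j$ already at that moment, and the symmetric requirement for row $j$ yields a contradiction, forcing at least three instructions per transposition. Aggregating over the disjoint cycles---with a parallel argument showing that the extra $3$-cycle in the odd case requires at least four instructions---gives the matching $\lfloor 3n/2 \rfloor$ lower bound.
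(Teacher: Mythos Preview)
Your inductive scheme---peel off an invertible principal, recurse on the complement---is exactly the paper's strategy, but you have specialised it to $k=2$, and that specialisation is where the real gap lies. The paper allows the principal $M_S$ to have \emph{any} size $k$, splitting the bound as $\lfloor 3k/2\rfloor + \lfloor 3(n-k)/2\rfloor \le \lfloor 3n/2\rfloor$, and then handles the exceptional case where $M$ has \emph{no} invertible principal of any size by a short digraph argument: the support digraph either is acyclic (so $M$ is, up to relabelling, strictly triangular and done in $n$ instructions) or has girth $n$, in which case $M$ is a monomial $n$-cycle and done in $n+1$ instructions.

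Your proposed remedies for the missing $2\times 2$ principal do not suffice. Treating it as ``additional base cases'' cannot work, because the phenomenon is not confined to small $n$: the permutation matrix of a single $n$-cycle has zero diagonal and every $2\times 2$ principal singular for \emph{every} $n\ge 3$, and more generally any block-diagonal matrix built from $3$-cycle (or longer-cycle) blocks has the same property. The ``route through a third row'' idea is undeveloped and, as stated, would cost a fourth preparatory instruction that you have no budget for. The clean fix is precisely the paper's: let $k$ vary, and when no invertible principal of any size exists, invoke the girth argument.

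Two smaller remarks. Your ``ghost match'' worry (a row trivial in the quotient but not in $M$) is a genuine technicality that also needs care in the paper's conversion step; it is handled by noting such a row costs nothing in the sub-program and one extra lift instruction suffices. For tightness, your fixed-point-free involution example and the per-transposition count are correct in spirit (the paper simply cites \cite{GR11a} for this), but the sentence ``aggregating over the disjoint cycles'' hides the key observation that instructions updating rows in one transposition's support are necessarily distinct from those updating rows in another's, so the local lower bounds of $3$ really do add.
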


\begin{proof}
We consider the general case where the matrix $M$ we want to compute is not necessarily invertible. We prove the statement by strong induction on $n \ge 1$; it is clear for $n=1$. Suppose it holds for up to $n-1$.

For any $S \subset [n]$, we refer to the matrix $M_S \in \GF(q)^{|S| \times |S|}$ with entries $M(i,j)$ for all $i,j \in S$ as the $S$-principal of $M$. Suppose that $M$ has a nonsingular $S$-principal $M_S$, say $S = \{1,\ldots,k\}$ and express $M$ as
$M = \left(\begin{array}{c|c}
    M_S & N\\
    \hline
    P & Q
    \end{array}\right),$
where $N \in \GF(q)^{k \times n-k}$, $P \in \GF(q)^{n-k \times k}$, $Q \in \GF(q)^{n-k \times n-k}$. We give a program for $M$ in two main steps and no more than $\left\lfloor 3n/2 \right\rfloor$ instructions.

The first step computes $(M_S|N)$. By hypothesis, $M_S$ can be computed in $\left\lfloor 3k/2 \right\rfloor$ instructions. We can easily convert that program in order to compute the matrix $(M_S | N)$ as follows. Consider the final update of row $j$: $y_j \gets f_j$ (i.e., the $j$-th row must be equal to that of $M$ after its last update). The $j$-th row of $N$, say $n_j$ is a linear combination of the rows of $(0|I_{n-k})$, hence simply replace $y_j \gets f_j$ by $y_j \gets f_j + n_j$ and in any subsequent instruction, replace every occurrence of $y_j$ by $y_j - n_j$.

The second step computes $(P|Q)$. Note that the rows $p_1,\ldots,p_{n-k}$ of $P$ can be expressed as linear combinations of those of $M_S$: $P = RM_S$ where the rows of $R = PM_S^{-1} \in \GF(q)^{n-k \times k}$ are denoted $r_1,\ldots,r_{n-k}$. By hypothesis, the matrix $X := Q - RN$ (with rows $x_1,\ldots,x_{n-k}$) can be computed in $\left\lfloor 3(n-k)/2 \right\rfloor$ instructions. Again this can be converted to compute $(P|Q)$ as follows. Suppose $i$ is the first row to have its last update in a program computing $X$, say it is $y_i \gets \sum_{l=1}^{n-k} a_{i,l} y_l$. Then the new program for $(P|Q)$ is 
$$
	y_{k+i} \gets \sum_{l=k+1}^n a_{i,l} y_l + \sum_{l=1}^k r_{i,l} y_l = (r_iM_S | x_i + r_i N) = (p_i | q_i).
$$
Then replace every future occurrence of $y_i$ with $y_{k+i} - \sum_{l=1}^k r_{i,l} y_l$. Suppose that $i'$ is the next row to have its last update $y_{i'} \gets \sum_{l=1}^{n-k} a_{i',l} y_l$; this is converted to
$$
	y_{k+i'} \gets \sum_{l=k+1}^n a_{i',l} y_l - a_{i',i} \sum_{l=1}^k r_{i,l} y_l + \sum_{l=1}^k r_{i',l} y_l = (r_{i'} M_S | x_{i'} + r_{i'} N) = (p_{i'} | q_{i'}).
$$
Again, every future occurrence of $i'$ will be replaced with $y_{k+i'} - \sum_{l=1}^k r_{i',l} y_l$, and so on. By induction, we can then easily prove that this converted program computes $(P | Q)$.

Now suppose $M$ does not have any invertible principal. Let $D$ be the directed graph whose adjacency matrix $A_D$ satisfies $A_D(i,j) = 1$ if $M(i,j) \ne 0$ and $A_D(i,j) = 0$ if $M(i,j) = 0$. If $D$ is acyclic, then $M$ can be computed in $n$ instructions, for it is (up to renaming the vertices in topological order) an upper triangular matrix with zeros on the diagonal. Otherwise, $D$ has girth $n$, for otherwise the adjacency matrix of the subgraph induced by a shortest cycle forms a nonsingular principal. Therefore $D$ is a cycle, and $M$ can be computed in $n+1$ instructions by \cite[Proposition 4.6]{GR11a}.

The tightness of the bound follows from \cite[Corollary 2]{GR11a}.
\end{proof}

By the proof of Theorem \ref{th:diameter_GL}, we see that the only matrices in $\GL(2,q)$ which are a product of three instructions are exactly those whose support is the permutation matrix of a transposition. Proposition \ref{prop:GL(2m,2)} below extends this result to any even order when the matrices are over $\GF(2)$.

\begin{proposition} \label{prop:GL(2m,2)}
In $\GL(2m,2)$, the only matrices which are the product of no fewer than $3m$ instructions are the permutation matrices of fix-point free involutions.
\end{proposition}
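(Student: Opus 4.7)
The plan is strong induction on $m$. The base case $m=1$ is the observation preceding the statement. For the inductive step, suppose $M\in\GL(2m,2)$ has complexity exactly $3m$; the goal is to show that $M$ is the permutation matrix of a fix-point free involution. The main tool is the decomposition at the heart of the proof of Theorem~\ref{th:diameter_GL}: for any nonsingular principal $M_S$ of size $k$,
\[
\mathcal{L}(M)\le \mathcal{L}(M_S)+\mathcal{L}(X)\le \lfloor 3k/2\rfloor+\lfloor 3(2m-k)/2\rfloor,
\]
where $X$ is the Schur complement of size $2m-k$. Matching the left-hand side against $3m$ gives successively stronger constraints on $M$.

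First I would rule out the ``forbidden'' nonsingular principals. An entry $M_{ii}=1$ would give $\mathcal{L}(M)\le 1+(3m-2)=3m-1$, so $M$ has zero diagonal. A proper nonsingular principal of odd size $2j+1$ would give $\mathcal{L}(M)\le (3j+1)+(3m-3j-2)=3m-1$, so every proper nonsingular principal of $M$ has even size. And if $M$ had no proper nonsingular principal at all, the acyclic-versus-cycle dichotomy at the end of the theorem's proof combined with $M\in\GL(2m,2)$ would force the support of $M$ to be a single directed $(2m)$-cycle, whose complexity is $2m+1<3m$ for $m\ge 2$ by \cite[Proposition~4.6]{GR11a}.

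Next I would apply the inductive hypothesis to each proper nonsingular principal. For any such $M_S$ of even size $2j\le 2m-2$, the decomposition bound together with $\mathcal{L}(M)=3m$ forces $\mathcal{L}(M_S)=3j$, so by induction $M_S$ is itself the permutation matrix of a fix-point free involution on $S$. Taking $S$ to be smallest, the pair structure of $M_S$ gives nonsingular $2\times 2$ principals of $M$, so the smallest size is exactly $2$. Choosing any such $S=\{i,j\}$ with $M_{ij}=M_{ji}=1$, the same identity forces $\mathcal{L}(X)=3(m-1)$, and by induction $X$ is the permutation matrix of a fix-point free involution on $[2m]\setminus S$; using $M_S^{-1}=M_S$ over $\GF(2)$, this reads
\[
X_{a,b}=M_{a,b}+M_{a,i}M_{j,b}+M_{a,j}M_{i,b},\qquad a,b\in[2m]\setminus S.
\]

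Finally I would combine these local constraints into a global statement. Let $G$ be the graph on $[2m]$ with edges the pairs $\{i,j\}$ such that $M_{ij}=M_{ji}=1$. I would first show $G$ is a perfect matching: if two edges of $G$ shared a vertex $i$, then for $m\ge 3$ the induction applied to a $4\times 4$ principal containing $i$ with its two neighbours, and for $m=2$ a direct analysis of the Schur-complement identity for both edges simultaneously, would contradict the fix-point free involution structure at size~$4$ (the offending row would have weight at least~$2$). Then, for each matching edge $\{i,j\}\in G$ and each $a\notin\{i,j\}$, I would show $M_{a,i}=M_{a,j}=M_{i,a}=M_{j,a}=0$ by substituting the known structure of $X$ into the displayed identity for pairs $S$ running over $G$; together with the perfect matching, this makes $M$ block-diagonal with $m$ swap blocks, which is precisely the permutation matrix of the fix-point free involution whose transpositions are the edges of $G$. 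I expect this last step to be the main obstacle: the Schur-complement identity controls $M$ outside $S$ only up to the cross term $M_{a,i}M_{j,b}+M_{a,j}M_{i,b}$, and forcing all such cross terms to vanish requires delicate combinatorial bookkeeping across several edges of $G$ simultaneously.
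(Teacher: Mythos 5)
Your setup matches the paper's: the subadditive bound $\mathcal{L}(M)\le\mathcal{L}(M_S)+\mathcal{L}(X)$ from the proof of Theorem~\ref{th:diameter_GL}, the elimination of diagonal entries, of odd-size invertible principals and of the no-principal case, and the inductive conclusion that every proper invertible principal is the permutation matrix of a fix-point free involution (whence the existence of a symmetric pair $M_{ij}=M_{ji}=1$, and the Schur-complement identity $X_{a,b}=M_{a,b}+M_{a,i}M_{j,b}+M_{a,j}M_{i,b}$, both of which are correct). But everything after that --- which is where the actual content of the proposition lies --- is not proved. Your perfect-matching step applies the inductive characterisation to a $4\times4$ principal containing a vertex and its two $G$-neighbours, yet that characterisation only constrains \emph{invertible} principals, and the principal you invoke need not be invertible (indeed, a vertex with two neighbours forces the relevant $3\times3$ principal to be singular, which is consistent with everything established so far and yields no contradiction by itself). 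And the elimination of the cross terms $M_{a,i}M_{j,b}+M_{a,j}M_{i,b}$, which you correctly identify as the main obstacle, is left entirely open; it is precisely the point where the statement could fail, since the local constraints you have assembled determine $M$ only up to these terms.

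The paper closes this gap differently and more locally: fixing one symmetric pair (say $\{1,2\}$), the Schur-complement condition becomes $PJ_1N+Q=J_{m-1}$ up to conjugation by a permutation; if $M\ne J_m$ some principal on $\{1,2,2k-1,2k\}$ differs from $J_2$, hence is singular with all its $3\times3$ principals singular and zero diagonal, and its two unknown diagonal-block entries are pinned down as $\alpha=bf+de+1$, $\beta=ah+cg+1$ by the Schur-complement condition. A finite verification over the remaining eight free entries shows no such matrix exists. Your proposal has no counterpart to this decisive finite check (nor to the separately verified base case $m=2$, which the general argument does not cover), so as written the proof is incomplete at its crux.
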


\begin{proof}
We prove it by strong induction on $m$; it is clear for $m=1$ and checked by computer for $m=2$, therefore we assume $m \ge 3$ and that it holds for up to $m-1$. For any $k \ge 1$, we denote the permutation matrix of $(1,2)\cdots(2k-1,2k)$ as $J_k$. We say that two matrices $M$ and $N$ are equivalent if $M = \Pi N \Pi^{-1}$ for some permutation matrix $\Pi$.

Let $M \in \GL(2m,2)$ be a matrix at distance $3m$ from the identity which is not equivalent to $J_m$. According to the proof of Theorem \ref{th:diameter_GL}, the graph $D$ with adjacency matrix $M$ must contain a directed cycle of length $< 2m$. The graph $D$ has girth $2$, for otherwise there is a principal of size other than $2$ and hence $M$ can be computed in fewer than $3m$ instructions by using the two-step algorithm in the proof of Theorem  \ref{th:diameter_GL}. More generally, any invertible principal of $M$ must have even size and be a conjugate of $J_k$ for some $k$.

Hence we can express $M$ (up to equivalence) as
$M = \left(\begin{array}{c|c}
    J_1 & N\\
    \hline
    P & Q
    \end{array}\right),$
where $N \in \GF(2)^{2 \times 2(m-1)}$, $P \in \GF(2)^{2(m-1) \times 2}$, $Q \in \GF(2)^{2(m-1) \times 2(m-1)}$. By the same argument, we can first compute $J_1$ and then the matrix $Q + PJ_1N$, hence these matrices must satisfy (up to equivalence) $PJ_1N + Q = J_{m-1}$.

Since $M \ne J_m$, there exists $2 \le k \le m$ such that the $\{1,2,2k-1,2k\}$-principal of $M$ is not equal to $J_2$. The conditions above mean that this principal is not invertible, neither is any of its $T$-principals for $|T| = 3$, and it can be expressed as
$$
    \begin{pmatrix}
    0 & 1 & a & b\\
    1 & 0 & c & d\\
    e & f & 0 & \alpha\\
    g & h & \beta & 0
    \end{pmatrix},
$$
where $\alpha = bf + de + 1$ and $\beta = ah + cg + 1$. However, it can be verified that no such matrix exists.
\end{proof}

We remark that the situation for $\GL(2m+1,2)$ is much more
complicated. Indeed, the permutation matrices of $(1,2)(3,4)\cdots(2m-1,2m,2m+1)$ and its conjugates are still extremal, but many other matrices are
also extremal. For example by Theorem 2 we know that the diameter of
the Cayley graph for $\GL(3,2)$ is 4 and clearly there are only two
extremal permutation matrices in $\GL(3,2)$ however there are 35 matrices equal to the
product of 4 and no fewer linear instructions in this group - see Table
\ref{GL3}.

\begin{table}
\begin{center}
\begin{tabular}{cccccc}
$\begin{pmatrix} 0&1&0\\ 0&0&1\\ 1&0&0 \end{pmatrix}$
&
$ \begin{pmatrix} 0&0&1\\ 1&0&0\\ 0&1&0 \end{pmatrix}$
& & & & \\
\\
$\begin{pmatrix} 0&1&0\\ 1&0&0\\ 1&0&1\end{pmatrix}$
&
$\begin{pmatrix} 0&1&0\\ 1&0&0\\ 0&1&1\end{pmatrix}$
&
$ \begin{pmatrix} 1&0&1\\ 0&0&1\\ 0&1&0 \end{pmatrix}$
&
$\begin{pmatrix} 1&1&0\\ 0&0&1\\ 0&1&0 \end{pmatrix}$
&
$\begin{pmatrix} 0&0&1\\ 0&1&1\\ 1&0&0 \end{pmatrix}$
&
$\begin{pmatrix} 0&0&1\\ 1&1&0\\ 1&0&0\end{pmatrix}$
\\
\\
$\begin{pmatrix} 0&1&1\\ 1&0&0\\ 0&1&0\end{pmatrix}$
&
$\begin{pmatrix} 0&1&1\\ 0&0&1\\ 1&0&0\end{pmatrix}$
&
$\begin{pmatrix} 0&0&1\\ 1&0&0\\ 1&1&0\end{pmatrix}$
&
$\begin{pmatrix} 0&0&1\\ 1&0&1\\ 0&1&0\end{pmatrix}$
&
$\begin{pmatrix} 0&0&1\\ 0&1&0\\ 1&1&0 \end{pmatrix}$
&
$\begin{pmatrix} 0&1&0\\ 1&0&1\\ 1&0&0\end{pmatrix}$
\\
\\
$\begin{pmatrix} 0&1&0\\ 0&1&1\\ 1&0&0\end{pmatrix}$
&
$\begin{pmatrix} 0&1&0\\ 0&0&1\\ 1&0&1\end{pmatrix}$
&
$\begin{pmatrix} 1&0&1\\ 1&0&0\\ 0&1&0\end{pmatrix}$
&
$\begin{pmatrix} 0&0&1\\ 1&1&0\\ 0&1&0\end{pmatrix}$
&
$\begin{pmatrix} 1&1&0\\ 0&0&1\\ 1&0&0\end{pmatrix}$
&
$\begin{pmatrix} 0&0&1\\ 1&0&0\\ 0&1&1\end{pmatrix}$
\\
\\
$\begin{pmatrix} 0&0&1\\ 1&0&1\\ 1&1&0\end{pmatrix}$
&
$\begin{pmatrix} 0&1&0\\ 1&0&1\\ 1&1&0 \end{pmatrix}$
&
$\begin{pmatrix} 0&1&1\\ 0&0&1\\ 1&1&0\end{pmatrix}$
&
$\begin{pmatrix} 0&1&1\\ 1&0&0\\ 1&1&0\end{pmatrix}$
&
$\begin{pmatrix} 0&1&1\\ 1&0&1\\ 1&0&0\end{pmatrix}$
&
$\begin{pmatrix} 0&1&1\\ 1&0&1\\ 0&1&0\end{pmatrix}$
\\
\\
$\begin{pmatrix} 1&1&1\\ 0&0&1\\ 0&1&0\end{pmatrix}$
&
$\begin{pmatrix} 0&1&0\\ 1&0&0\\ 1&1&1\end{pmatrix}$
&
$\begin{pmatrix} 0&0&1\\ 1&1&1\\ 1&0&0\end{pmatrix}$
\\
\\
$\begin{pmatrix} 1&1&0\\ 1&1&1\\ 1&0&0\end{pmatrix}$
&
$\begin{pmatrix} 0&0&1\\ 1&1&1\\ 0&1&1\end{pmatrix}$
&
$\begin{pmatrix} 1&1&1\\ 1&1&0\\ 0&1&0\end{pmatrix}$
&
$\begin{pmatrix} 1&1&1\\ 0&0&1\\ 1&0&1\end{pmatrix}$
&
$\begin{pmatrix} 1&0&1\\ 1&0&0\\ 1&1&1\end{pmatrix}$
&
$\begin{pmatrix} 0&1&0\\ 0&1&1\\ 1&1&1\end{pmatrix}$
\end{tabular}
\end{center}
\caption{The 35 matrices in GL(3,2) that are a product of four
linear instructions and their orbits under the action of $\Sym(3)$.}\label{GL3}
\end{table}

\section{Some matrix groups} \label{sec:matrix_groups}

We first discuss the special linear groups. Recall that a {\em transvection} is any permutation $t_{\phi,v}$ of $A^n$, such that
$$
    t_{\phi,v}(x) = x + \phi^\top v x^\top
$$
for all $x \in \GF(q)^n$, where $v, \phi \in \GF(q)^n$, \cite{CH91}. Then $t_{\phi,v}$ is an instruction if and only if $\phi$ (viewed as a
column vector) has only one nonzero coordinate. In other words, any transvection which is an instruction is represented by a shear matrix $S(i,e^i + ae^j)$ for some $i,j$ and $a \in \GF(q)$.

\begin{proposition} \label{prop:SL} ($i$) The group $\SL(n,q)$ is internally computable
for any $n$ and prime power $q$.

($ii$) If $q \ne 2$ then $\SL(n,q)$ is not fast in $\GL(n,q)$.
\end{proposition}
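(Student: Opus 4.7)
For part~($i$), my plan is to identify the SL-instructions explicitly: since $\det S(i,v) = v_i$, they are precisely the transvection shears $S(i, e^i + a e^j)$ with $j \neq i$ and $a \in \GF(q)$. It is classical that these elementary transvections generate $\SL(n,q)$, and one verifies this by ``row-addition-only'' Gaussian elimination: given $M \in \SL(n,q)$, sweep out the below-diagonal entries of the first column using transvections from the pivot row (after first making the $(1,1)$-pivot nonzero by adding a suitable row if necessary), iterate on the lower-right $(n-1) \times (n-1)$ block, and finally dispose of the remaining diagonal element of $\SL(n,q)$ via a short standard identity expressing $\mathrm{diag}(a, a^{-1}, 1, \ldots, 1)$ as a product of transvections. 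Every operation used is itself an SL-instruction, so $\SL(n,q)$ is internally computable.

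For part~($ii$), my strategy is to exhibit a single diagonal matrix that is cheap in $\GL(n,q)$ but not in $\SL(n,q)$. Since $q \neq 2$, pick $a \in \GF(q) \setminus \{0,1\}$ and set $D := \mathrm{diag}(a, a^{-1}, 1, \ldots, 1)$, which lies in $\SL(n,q)$. In $\GL(n,q)$ the two instructions $y_1 \gets a y_1$ and $y_2 \gets a^{-1} y_2$ produce $D$, so $\mathcal{L}(D, \GL(n,q)) \le 2$. It therefore suffices to show that no product of two nontrivial SL-instructions is diagonal, which will force $\mathcal{L}(D, \SL(n,q)) \ge 3$.

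For this lower bound I would run a short case split. Apply $S(i_1, e^{i_1} + c_1 e^{j_1})$ then $S(i_2, e^{i_2} + c_2 e^{j_2})$ to the identity, with $c_1, c_2 \ne 0$ and $j_\ell \ne i_\ell$. If $i_2 = i_1$, the final row $i_1$ retains a nonzero entry $c_1$ in column $j_1 \ne i_1$. If $i_2 \ne i_1$ and $j_2 = i_1$, the final row $i_2$ acquires a nonzero entry $c_2$ in column $i_1 \ne i_2$. If $i_2 \ne i_1$ and $j_2 \ne i_1$, both rows $i_1$ and $i_2$ independently carry off-diagonal entries $c_1$ and $c_2$. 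In every case an off-diagonal entry survives, so the product cannot be diagonal.

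The only point that needs genuine care is the short explicit identity in part~($i$) realising a diagonal element of $\SL(n,q)$ as a product of transvections; the Gaussian reduction itself and the two-transvection case analysis in part~($ii$) are otherwise mechanical.
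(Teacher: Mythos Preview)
Your approach mirrors the paper's for both parts: elementary transvections generate $\SL$ for (i), and the diagonal matrix $D=\mathrm{diag}(a,a^{-1},1,\ldots,1)$ witnesses the failure of fastness for (ii). There is, however, a genuine gap in your identification of the $\SL$-instructions. Since $\det S(i,v)=v_i$, an $\SL$-instruction is any $S(i,v)$ with $v_i=1$; for $n\ge 3$ this allows several nonzero off-diagonal entries in row $i$, not just one. This misidentification is harmless for part (i) (the elementary transvections are among the $\SL$-instructions and already generate $\SL(n,q)$), but it makes your case analysis in part (ii) incomplete for $n\ge 3$, since you only examine products of two \emph{elementary} transvections. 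The paper sidesteps this by arguing only the case $n=2$ (where your description of $\SL$-instructions is correct) and declaring the extension ``clear''. The repair is short: a product of two $\SL$-instructions alters at most two rows, so if it equals $D$ these must be exactly rows $1$ and $2$; but the first-updated row is left untouched by the second instruction and therefore still has diagonal entry $v_{i_1}=1\ne a^{\pm1}$.

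There is also a small slip in your sub-case $i_1=i_2$: the claim that column $j_1$ ``retains a nonzero entry $c_1$'' fails when $j_1=j_2$ and $c_2=-c_1$, in which case the product is the identity. This is diagonal, so your blanket assertion that no product of two nontrivial $\SL$-instructions is diagonal is false; what you need (and what still holds) is that no such product equals $D$, since $D\ne I$.
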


\begin{proof}
($i$) This is simply the observation that any transvection is a product of instructions and the transvections are well known to generate the special
linear group - see for instance \cite[p.45]{Wil09}.

($ii$) We prove this in the case $n=2$, the extension to the general
case being clear. If $q\not=2$ then there exists an element
$\alpha \in \GF(q)$ such that $\alpha\not=0,1$. Inside
$\GL(2,q)$ we thus have
$$\begin{pmatrix}
\alpha&0\\
0&\alpha^{-1}
\end{pmatrix}=
\begin{pmatrix}
\alpha&0\\
0&1
\end{pmatrix}
\begin{pmatrix}
1&0\\
0&\alpha^{-1}
\end{pmatrix}
$$
which expresses the above element of $\SL(2,q)$ as a
product of two instructions. Inside $\SL(2,q)$ however we
have that
\begin{align*}
	\begin{pmatrix}
	1+xy&x\\
	y&1
	\end{pmatrix} &=
	\begin{pmatrix}
	1&x\\
	0&1
	\end{pmatrix}
	\begin{pmatrix}
	1&0\\
	y&1
	\end{pmatrix},\\
	\begin{pmatrix}
	1&x\\
	y&1+xy
	\end{pmatrix} &=
	\begin{pmatrix}
	1&0\\
	y&1
	\end{pmatrix}
	\begin{pmatrix}
	1&x\\
	0&1
	\end{pmatrix}
\end{align*}
for any $x,y\in\GF(q)$. Since $\alpha\not=1$ the original
matrix cannot be of this form and thus cannot be expressed as a
product of just two instructions inside $\SL(2,q)$.
\end{proof}

The argument in the proof of ($ii$) can be easily generalised to show that any subgroup of $\GL$ defined as the set of matrices with determinant in a proper subgroup of the multiplicative group of $\GF(q)$ is not fast.

We remark that if $q=2$ then $\SL(n,q)=\GL(n,q)$. Unfortunately most
other groups that are naturally matrix groups are not internally
computable in their natural $\GF(q)$ modules.

\begin{proposition}
Orthogonal groups of type +, unitary and symplectic groups are not internally computable.
\end{proposition}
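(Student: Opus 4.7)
The plan is to determine, in each case, exactly which shears $S(i,v)$ lie in the group, and then to observe that the resulting set of instructions generates only a proper subgroup. Writing $S(i,v) = I + e^i w^\top$ with $w = v - e^i$ (viewed as a column vector), the defining condition $S^\top B S = B$ for a bilinear form with matrix $B$ expands to
\begin{equation*}
c w^\top + w r^\top + B_{ii}\, w w^\top = 0,
\end{equation*}
where $c = Be^i$ is column $i$ of $B$ and $r = B^\top e^i$; in the unitary setting the Hermitian analogue $\bar S^\top B S = B$ becomes $c w^\top + \bar w r^\top + B_{ii}\, \bar w w^\top = 0$. In each of the three cases this rank-$\le 2$ matrix equation is severely restrictive and can be solved by inspection.

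For the symplectic group with the standard block-diagonal form having blocks $\bigl(\begin{smallmatrix}0&1\\-1&0\end{smallmatrix}\bigr)$, one has $r = -c$, $B_{ii} = 0$, and $c$ has a single nonzero entry at the partner position $i'$ of $i$. The equation then reduces to $c w^\top = w c^\top$, forcing $w = \lambda e^{i'}$ for some $\lambda \in \GF(q)$. Thus every instruction in the symplectic group is a transvection affecting only the hyperbolic pair $\{i,i'\}$, the generated group preserves this pairing, and for $m \ge 2$ it is contained in the block-diagonal subgroup $\SL(2,q)^m$, which is a proper subgroup of $\mathrm{Sp}(2m,q)$.

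For the orthogonal group of type $+$, use the hyperbolic form $B = \bigl(\begin{smallmatrix}0 & I_m \\ I_m & 0\end{smallmatrix}\bigr)$; then $B_{ii}=0$, $r=c$, and $c = e^{i'}$ for the unique partner $i' = i\pm m$. Comparing $(j,k)$-entries in $c w^\top + w c^\top = 0$ yields $w_k = 0$ for all $k \ne i'$ together with $2 w_{i'} = 0$, so in odd characteristic $w = 0$ and the only instruction is the identity. In characteristic two the surviving candidates are the transvections $I + \lambda e^i (e^{i'})^\top$; requiring also that they fix the standard quadratic form $Q(x) = \sum_{k=1}^m x_k x_{m+k}$ gives
\begin{equation*}
Q(Sx) - Q(x) = \lambda x_{i'} B(x, e^i) + \lambda^2 x_{i'}^2 Q(e^i) = \lambda x_{i'}^2,
\end{equation*}
since $Q(e^i)=0$ and $B(x,e^i) = x_{i'}$; setting this to zero for all $x$ forces $\lambda = 0$. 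Either way the orthogonal group of type $+$ has no nontrivial instructions.

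For the unitary group $\mathrm{U}(n,q)$ over $\GF(q^2)$, take $B = I_n$; then $c = \bar c = e^i$ and $B_{ii} = 1$. Examining the entries of $e^i w^\top + \bar w (e^i)^\top + \bar w w^\top = 0$ outside row and column $i$ gives $\bar w_j w_k = 0$ for $j,k \ne i$, forcing $w$ to be supported on $\{i\}$; the $(i,i)$-entry then reads $(1+w_i)(1+\bar w_i) = 1$, i.e.\ $v_i$ has norm one in $\GF(q^2)/\GF(q)$. Every instruction is therefore a diagonal matrix whose single non-identity entry has norm one, so the subgroup generated is abelian and properly contained in $\mathrm{U}(n,q)$ for $n \ge 2$. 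The step I expect to require the most care is the even-characteristic orthogonal case, where the true constraint comes from the quadratic form rather than from its polar bilinear form; once that is handled the other cases reduce to routine rank computations on the displayed equation of the opening paragraph.
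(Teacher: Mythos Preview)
Your proof is correct and more systematic than the paper's. Both arguments identify the instructions lying in each group and check that they generate a proper subgroup, but you do this uniformly via the rank-$\le 2$ equation $cw^\top + wr^\top + B_{ii}\,ww^\top = 0$ (and its Hermitian analogue), while the paper treats each case by brief inspection. Your analysis is in fact sharper: the paper asserts that \emph{no} instruction satisfies $AA^\top = I$ or $A\bar A^\top = I$, overlooking the diagonal matrices with a single $-1$ entry (orthogonal, odd characteristic) or a single norm-one entry (unitary); your unitary computation correctly finds the latter and shows they generate only an abelian diagonal subgroup, which repairs that gap. Your use of the hyperbolic Gram matrix for type $+$, together with the explicit quadratic-form check in characteristic two, is the right treatment of that case and is more careful than the paper's purely bilinear argument. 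In the symplectic case you and the paper use different standard forms but reach the same conclusion, that the instructions are the within-pair transvections generating a block-diagonal copy of $\SL(2,q)^m$; your explicit restriction to $m\ge 2$ is appropriate and matches the implicit assumption in the paper.
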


\begin{proof}
In the orthogonal and unitary cases this is simply the observation
that a matrix $A$ is an element of these groups if it satisfies
$AA^\top = I$ or $A\bar{A}^\top = I$, respectively, where the bar indicates the automorphism of $\GF(q)$
of order $2$ when it exists \cite[p.66 \&
p.70]{Wil09}. Clearly no instruction satisfies either condition and
so these groups contain no instructions whatsoever.

Elements of the symplectic group Sp$(2n,q)$ are precisely the
invertible matrices of the form
$\left(\begin{array}{c|c}
A&B\\
\hline C&D
\end{array}\right)$
where $A$, $B$, $C$ and $D$ are $n\times n$ matrices such that
\begin{eqnarray*}
    AD^\top-BC^\top &=& I,\\
    AB^\top &=& A^\top B \quad \mbox{and}\\
    CD^\top &=& C^\top D.
\end{eqnarray*}
For an instruction to be of the above form one of $B$ or $C$ must be
the all zeros matrix and $A=D=I$. If $C=0$, we see that $B$ must be a matrix with only one nonzero entry, which lies on the diagonal; if $B=0$, we obtain its transpose. Therefore, the symplectic instructions generate a group of matrices where $A$,$B$, $C$ and $D$ are all diagonal; this is clearly a proper subgroup of Sp$(2n,q)$.
\end{proof}

\begin{proposition}
The groups $^2B_2(2^{2r+1})$, $^3D_4(q)$, $G_2(q)$,
$^2G_2(3^{2r+1})$ and $^2F_4(2^{2r+1})$ are not internally
computable.
\end{proposition}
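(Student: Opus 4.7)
The plan is to observe that any instruction $M$ satisfies $\rk(M-I)\le 1$, because by definition $M-I$ has all rows zero except possibly one. Moreover, if $M-I$ is nonzero, it has the form $M-I=e_k u^{\top}$ for some standard basis column vector $e_k$ and some row vector $u\in \GF(q)^n$. Thus, to prove that a group $G\le\GL(n,q)$ is not internally computable, it is enough to show that the set of $g\in G$ with $\rk(g-I)\le 1$ and with the nonzero row of $g-I$ aligned with a coordinate direction generates a proper subgroup; in the strongest version, that the only such $g$ is the identity.

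For each of the listed groups, I would work inside its smallest faithful matrix representation --- the $4$-dimensional module for $^2B_2(2^{2r+1})$, the $7$-dimensional modules for $G_2(q)$ and $^2G_2(3^{2r+1})$, the $8$-dimensional module for $^3D_4(q)$, and the $26$-dimensional module for $^2F_4(2^{2r+1})$. Elements $g$ with $\rk(g-I)\le 1$ split into two classes: unipotent transvections (with $(g-I)^2=0$) and near-scalar elements which fix a hyperplane pointwise and act by a scalar $\ne 1$ on a complementary line. To rule out transvections, I would consult the explicit parametrisation of root subgroups of each group (as in Carter's \emph{Simple Groups of Lie Type} or Wilson's \emph{Finite Simple Groups}): in each case the root elements of smallest Jordan type on the natural module satisfy $\rk(g-I)\ge 2$, because they alter at least two rows simultaneously --- for the Suzuki and Ree groups this is forced by the Frobenius-twisted shape of the root element, and for $G_2$ and $^3D_4$ it is forced by the long-root Jordan block structure on the natural module. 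Near-scalar elements are excluded because the natural module is absolutely irreducible and the eigenspace decomposition of any maximal torus element does not align with a coordinate hyperplane-plus-line splitting; explicitly, the weights parametrising a torus element are monomials in the torus coordinates, and forcing all but one of them to equal $1$ forces all of them to equal $1$.

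The main obstacle is the case-by-case dependence on the particular root system and twisting, since no single reference records the relevant Jordan ranks uniformly for all five families. A cleaner uniform alternative is to exploit the rigidity of the defining invariant form or algebra preserved by each group (a pair of Frobenius-twisted symplectic forms for $^2B_2$, an octonion multiplication or trilinear form for $G_2$ and $^3D_4$, and the exceptional Jordan algebra for $^2F_4$): a rank-one perturbation $I+e_k u^{\top}$ preserving such a rigid structure is forced to have $u=0$. Either route leads to the same conclusion: the instructions contained in each of these groups generate only the identity, so the group is not internally computable.
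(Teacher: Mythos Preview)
Your approach and the paper's pursue the same underlying strategy---show that each group, in its natural module, contains no non-identity instruction, so that internal computability fails trivially---but the presentations differ. The paper writes down the explicit parametrisation of the maximal unipotent (lower-triangular) subgroup of $^2B_2(2^{2r+1})$ from Wilson, observes that every non-identity element there has several non-trivial rows and hence is not an instruction, and then declares the remaining four families ``entirely analogous''. You instead encode the instruction condition as $\rk(g-I)\le 1$ with the image of $g-I$ spanned by a coordinate vector, split into the unipotent (transvection) case and the diagonalisable (``near-scalar'') case, eliminate the latter via the determinant (all five groups sit inside $\SL$ on these modules), and reduce the former to Jordan-type information about root elements.

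Your reformulation is cleaner, and the determinant argument disposing of the non-unipotent case is a genuine simplification the paper does not isolate. Two points, however, need tightening before this is a proof rather than a plan. First, the assertion that root elements have $\rk(g-I)\ge 2$ on the natural module is the entire content of the argument, and you only gesture at it (``Frobenius-twisted shape'', ``long-root Jordan block structure''); the paper at least writes down an explicit matrix for one family. Second, and more seriously, knowing that \emph{root} elements have $\rk(g-I)\ge 2$ does not by itself rule out some other unipotent element having $\rk(g-I)=1$; you would need either the (true, but not free) fact that long-root elements minimise $\rk(g-I)$ over all non-identity unipotents, or a check of every unipotent class. Your alternative via the invariant form or algebra is attractive but remains a one-sentence sketch. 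Finally, your chosen module dimensions for $G_2$ and $^3D_4$ (namely $7$ and $8$) differ from those the paper uses ($8$ and $26$); since ``internally computable'' is defined relative to a fixed matrix representation, this discrepancy should be resolved rather than ignored.
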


\begin{proof}
We prove this in the case of $^2B_2(2^{2r+1})$ acting on its natural
4 dimensional $\GF(2^{2r+1})$ module the cases of $^2G_2(3^{2r+1})$
acting on its natural 7 dimensional $\GF(3^{2r+1})$ module and
$^2F_4(2^{2r+1})$ acting on its natural 26 dimensional
$\GF(2^{2r+1})$ module being entirely analogous. Furthermore
analogous arguments apply to $^3D_4(q)$ and $G_2(q)$ acting on their
natural 26 and 8 dimensional $\GF(q)$ modules respectively.

An instruction whose only non-zero off-diagonal entries are
contained entirely on the bottom row must be contained in the
subgroup of lower triangular matrices. The non-trivial elements of
this subgroup, however, are of the form
$$\left(\begin{array}{cccc}
1&0&0&0\\
\alpha\beta^{-1}&1&0&0\\
\alpha\beta&\beta^2&1&0\\
\alpha^2&0&\alpha\beta^{-1}&1
\end{array}\right)$$
where $\alpha\in\mbox{GF}(2^{2r+1})$ and $\beta=\alpha^{2^{r+1}-1}$
\cite[p.115]{Wil09}. Clearly this subgroup contains no instructions
and so the subgroup of $^2B_2(2^{2r+1})$ generated by any
instructions is a proper subgroup.
\end{proof}

\section{Generating linear groups} \label{sec:generating}

The purpose of this section is to determine the minimum number of instructions sufficient to generate some matrix groups. The reader is reminded of the elements $S(i,v)$ that we defined just before Theorem \ref{th:diameter_GL}. We also define the vectors $v^i \in \GF(q)^n$ such that $v^i = e^i + e^{i+1}$ for $i \le n-1$ and $v^n = e^1 + e^n$.

%\subsection{Smallest set of generating instructions for SL}

We first consider the special linear group.

\begin{theorem} \label{th:SL}
The group $\SL(n,q)$ is generated by $n$ instructions unless $n=2$, $q=2^m$ ($m \ge 2$), where it is generated by $3$ instructions.
\end{theorem}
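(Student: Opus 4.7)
The plan is to establish the lower bound $|\Phi| \ge n$ (with $|\Phi| \ge 3$ in the exceptional case) and then exhibit matching generating sets.

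\emph{Lower bound.} If a set $\Phi$ of instructions fails to update some row $i$, then every element of $\langle \Phi \rangle$ has $e^i$ as its $i$-th row, which precludes $\langle \Phi \rangle = \SL(n,q)$. Hence $|\Phi| \ge n$. For the exceptional case $n = 2, q = 2^m$ with $m \ge 2$, I additionally rule out $|\Phi| = 2$: every instruction $S(j, v) \in \SL(2, q)$ has $v_j = 1$ (since $\det S(j, v) = v_j$) and is therefore a transvection, and in characteristic $2$ every transvection is an involution. With one instruction per row, we obtain $t_1 = I + aE_{12}, t_2 = I + bE_{21}$ for some $a, b \in \GF(q)^*$, and $\langle t_1, t_2 \rangle$ is dihedral of order $2|t_1 t_2|$. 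The product has $\mathrm{tr}(t_1 t_2) = ab \ne 0$, so $t_1 t_2$ is semisimple with order dividing $q \pm 1$ (both odd in characteristic $2$); hence $|\langle t_1, t_2 \rangle| \le 2(q + 1) < q(q^2 - 1) = |\SL(2, q)|$.

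\emph{Upper bound, generic case.} Fix $\alpha \in \GF(q)$ with $\GF(p)(\alpha) = \GF(q)$ (so $\alpha = 1$ if $q$ is prime). For $n \ge 3$, I use the $n$ transvection-instructions $S(i, v^i) = I + E_{i, i+1}$ for $i < n$ together with $S(n, e^n + \alpha e^1) = I + \alpha E_{n, 1}$. Iterated application of the Steinberg commutator identity $[I + aE_{ij}, I + bE_{jk}] = I + ab E_{ik}$ (for distinct $i, j, k$) produces $I + \alpha^s E_{ij}$ for every distinct pair $(i, j)$ and every $s \ge 0$; multiplying such commuting transvections recovers $I + \beta E_{ij}$ for every $\beta \in \GF(p)(\alpha) = \GF(q)$, and these generate $\SL(n, q)$. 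For $n = 2$ with $q$ odd, the two transvections $I + E_{12}$ and $I + \alpha E_{21}$ generate $\SL(2, \GF(p)(\alpha)) = \SL(2, q)$ by Dickson's classical theorem on pairs of transvections. For $n = 2, q = 2$, the identification $\SL(2, 2) = \langle I + E_{12}, I + E_{21} \rangle \cong \Sym(3)$ is direct.

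\emph{Upper bound, exceptional case.} I take $t_1 = I + E_{12}, t_2 = I + E_{21}, t_3 = I + \alpha E_{12}$ with $\alpha \in \GF(2^m)$ a field generator, so $H = \langle t_1, t_2, t_3 \rangle$ contains $\langle t_1, t_2 \rangle \cong \SL(2, 2) \cong \Sym(3)$. I then invoke Dickson's classification of subgroups of $\PSL(2, 2^m) = \SL(2, 2^m)$ to eliminate every proper overgroup of $\Sym(3)$: dihedral subgroups of $\SL(2, 2^m)$ have odd-order cyclic parts (both split and non-split tori have orders $q \mp 1$, odd in characteristic $2$), so no two distinct reflections commute, but $t_1$ and $t_3$ are distinct commuting involutions; Borel subgroups each stabilize a unique $1$-dimensional subspace, whereas the only $t_1$-stable line is $\mathrm{span}(e^1)$ and the only $t_2$-stable line is $\mathrm{span}(e^2)$; $A_5$ does not contain $\Sym(3)$; and a proper field subgroup $\SL(2, 2^r)$ with $r \mid m$, $r < m$ cannot contain $t_3$ since $\alpha \notin \GF(2^r)$. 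Hence $H = \SL(2, 2^m)$. The main obstacle is this last subgroup analysis, and the subtlest point is the dihedral exclusion, which relies crucially on the odd order of both tori of $\SL(2, 2^m)$ in characteristic $2$.
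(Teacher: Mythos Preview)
Your argument for $n\ge 3$ via the Steinberg commutator relation $[I+aE_{ij},\,I+bE_{jk}]=I+abE_{ik}$ is correct and is genuinely different from the paper's proof. The paper bootstraps by induction on $n$ from a carefully analysed base case ($n=2$ for $q$ odd, $n=3$ for $q$ even), carrying along the specific pair $(x,y)$ that makes the base case work. Your approach avoids this entirely: the single element $\alpha$ with $\GF(p)(\alpha)=\GF(q)$, placed on the one ``wrap-around'' generator $I+\alpha E_{n,1}$, lets you manufacture every $I+\beta E_{ij}$ directly by chasing commutators around the cycle and taking additive combinations. This is cleaner and treats all $q$ uniformly for $n\ge 3$; the paper's inductive scheme, on the other hand, makes the dependence on the $\SL(2,q)$ structure explicit and is closer in spirit to the companion argument for $\GL(n,q)$ in Theorem~\ref{th:GL}.

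Two points in your $n=2$, $q=2^m$ case need repair. First, the assertion ``$A_5$ does not contain $\Sym(3)$'' is false ($\langle (1\,2\,3),(1\,2)(4\,5)\rangle\cong\Sym(3)$ inside $A_5$). This does no real damage, because in characteristic $2$ any copy of $A_5\cong\SL(2,4)$ inside $\SL(2,2^m)$ is a subfield subgroup and is therefore already covered by your subfield clause; simply delete the $A_5$ item. Second, your subfield clause only excludes the \emph{standard} copy $\SL(2,2^r)$: you argue $t_3\notin\SL(2,2^r)$ because $\alpha\notin\GF(2^r)$, but Dickson's list consists of conjugacy classes, so you must also exclude every $g\,\SL(2,2^r)\,g^{-1}$. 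The fix is immediate: $\mathrm{tr}(t_2t_3)=\alpha$, traces are conjugation-invariant, and every element of a conjugate of $\SL(2,2^r)$ has trace in $\GF(2^r)$; hence $t_2t_3$ lies in no subfield subgroup. The paper handles this point in the same way (via the trace of the product of two of the generators).
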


\begin{proof}
The rest of the proof goes by induction on $n$, but we split the proof according to the parity of $q$. First, suppose $q$ is odd. An immediate consequence
of a classical Theorem incorrectly attributed to Dickson \cite{Dic01} (it was actually
proved by Wilman and Moore, see \cite[Corollary 2.2]{Kin06}) tells us that the maximal subgroups of
$\PSL(2,q)$, $q$ odd (these can easily be seen to ``lift'' to maximal subgroups of
$\SL(2,q)$) are all isomorphic to one of
\begin{itemize}
    \item[-] $\Alt(4)$, $\Sym(4)$ or $\Alt(5)$;
    \item[-] A dihedral group of order either $q+1$ or $q-1$;
    \item[-] A subfield subgroup;
    \item[-] A stabiliser of a one dimensional subspace in the action on the $q+1$ subspaces of $\GF(q)^2$ on which $\mathrm{(P)SL}(2,q)$ naturally acts.
\end{itemize}

Consider the matrices/instructions
$$
	S(1,(1,x)) = \begin{pmatrix} 1 & x\\ 0 & 1\end{pmatrix}, \quad S(2,(y,1)) = \begin{pmatrix} 1 & 0 \\  y & 1\end{pmatrix}.
$$
We prove that the group they generate does not belong to any of the maximal subgroups. First, the copies of Alt(4), Sym(4) and Alt(5).  In characteristic 3, the only way two elements of order 3 can be contained in a copy of Alt(4) or Sym(4) is if their product has order 1 or 3 (in which case they're contained in the same cyclic subgroup, which the above two matrices clearly are not) or 2 (and by direct calculation our two matrices do not have a product of order 2). Finally we can eliminate Alt(5) since this subgroup can only exist in characteristic 3 if $q=3$ or 9 which are easily eliminated by computer. In characteristic 5 there are no elements order 5 in Alt(4) and Sym(4) and for Alt(5) this maximal subgroup only exists when $q$ satisfies certain congruences that a power of 5 never satisfies. For characteristic  greater than $p>5$ there are clearly no elements of order $p$ in any of Alt(4), Sym(4) or Alt(5).

Since $p$ is coprime to
both $q+1$ and $q-1$ neither of these belong to a maximal dihedral subgroup.
The only one dimensional subspace fixed by the first matrix is spanned by
the (column) vector $(1,0)^\top$ whilst the second only fixes the subspace
spanned by the (column) vector $(0,1)^\top$, so no one dimensional subspace is fixed
by the subgroup these generate. Recall that the product of these two
matrices is the matrix
$\begin{pmatrix} 1+xy & x\\ y & 1 \end{pmatrix}$
which has trace $2+xy$. Choosing $x$ and $y$ so
that $2+xy$ is contained in no proper subfield of $\GF(q)$ now gives a pair of
elements that cannot generate a subfield subgroup. It follows that this
pair must generate the whole group.\\
\\
We now prove the inductive step. Let $x,y \in \GF(q)$ such that the instructions
$\begin{pmatrix} 1 & x\\ 0 & 1 \end{pmatrix},$ $\begin{pmatrix} 1 & 0\\ y & 1 \end{pmatrix}$
generate $\SL(2,q)$. Then we claim that the following set of $n$ instructions generates $\SL(n,q)$:
$$
    \{S(i,v^i) : 1 \le i \le n-2\} \cup \{S(n-1,e^{n-1} + xe^n), S(n, e^n + ye^1)\}.
$$
Let us remark that we can easily generate any instruction of the form $S(i,e^i + e^j)$ for $1 \le i < j \le n-1$ (and hence any of the form  $S(i,e^i - e^j)$ as well). We can then easily generate $S(i,e^i + xe^n)$ for any $1 \le i \le n-1$. We also generate any transvection of the form $S(n,e^n + y e^i)$ for any $1 \le i \le n-1$ as such: %(we only show the columns and rows indexed $1,j,n$:
$$
    S(n,e^n + ye^i) = S(n,e^n-ye^1) S(1,e^1-e^i) S(n,e^n+ye^1) S(1,e^1+e^i).
$$
Displaying only the columns and rows indexed $1,i,n$, the equation above reads
$$
	\begin{pmatrix} 1 &0 &0\\ 0 &1 &0\\ 0 &y &1	\end{pmatrix} =
	\begin{pmatrix} 1 &0 &0\\ 0 &1 &0\\ -y &0 &1\end{pmatrix}
	\begin{pmatrix} 1 &-1 &0\\ 0 &1 &0\\ 0 &0 &1\end{pmatrix}
	\begin{pmatrix} 1 &0 &0\\ 0 &1 &0\\ y &0 &1\end{pmatrix}
	\begin{pmatrix} 1 &1 &0\\ 0 &1 &0\\ 0 &0 &1\end{pmatrix}.
$$
By combining the two types of transvections, we obtain all possible transvections of the type $S(i,e^i + ae^n)$ or $S(n,e^n + ae^i)$ for all $a \in \GF(q)$. We are done with the last coordinate, and we tackle the penultimate coordinate by considering
$$
    Q = \left(\begin{array}{c|cc}
    I_{n-2} & \multicolumn{2}{c}{0}\\
    \hline
    \multirow{2}{*}{0} & 0 & -1\\
     & 1 & 0
    \end{array}\right).
$$
Note that $Q$ is indeed generated by $S(n-1,e^{n-1} + xe^n)$ and $S(n,e^n + y e^{n-1})$. We then obtain the two required types of transvections:
\begin{align*}
    S(n-1,e^{n-1} + ye^i) &= Q S(n,e^n+ye^i) Q^{-1}\\
    S(i,e^i + xe^{n-1}) &= Q S(i,e^i + xe^n) Q^{-1}.
\end{align*}
The proof goes on from $n-1$ down to $2$, thus generating any possible transvection.\\
\\
Now suppose $q$ is even. Any instruction in $\SL(2,2^m)$ is an element of order two, and hence any group generated by two instructions is dihedral. However, $\SL(2,2^m)$ is not a dihedral group for $m \ge 2$ and hence cannot be generated by two instructions. We now prove it can be generated by three instructions.

We recall from Dickson's theorem \cite{Dic01} that the maximal
subgroups of $\SL(2,2^m)$ are each isomorphic to either
\begin{itemize}
\item a stabiliser of a one dimensional subspace in the action on the $2^m+1$ subspaces of
$\GF(2^m)^2$ on which (P)SL$(2,2^m)$ naturally acts;
\item a subfield subgroup;
\item a dihedral group of order $2(2^m\pm1)$.
\end{itemize}

Consider the matrices 
$$
	A:=\begin{pmatrix}
	1&0\\
	x&1
	\end{pmatrix}, \quad
	B:=\begin{pmatrix}
	1&x\\
	0&1
	\end{pmatrix}, \quad
	C:=\begin{pmatrix}
	1&x^2\\
	0&1
	\end{pmatrix}
$$
where $x\in\GF(2^m)$ is contained in no proper subfield. Let
$H$ be the subgroup generated by the matrices $A$ and $B$. By the
same arguments as the case $\SL(2,q)$ with $q$ odd we know that $H$
is contained in neither a subspace stabilizer nor a subfield
subgroup and so the only maximal subgroups containing $H$ must be
dihedral of order $2(q\pm1)$. Note that since these are dihedral
groups of twice odd order these subgroups cannot contain pairs of
involutions that commute. Since $BC=CB$ it follows that $C$ cannot
be contained in any of these dihedral subgroups and so no maximal
subgroup contains all of $A$, $B$ and $C$, hence they must generate
the whole group.\\
\\
The base case of the induction thus occurs for $n=3$. Let $x$ such that
$\begin{pmatrix}
1&0\\
x&1
\end{pmatrix}$,
$\begin{pmatrix}
1&x\\
0&1
\end{pmatrix}$ and
$\begin{pmatrix}
1&x^2\\
0&1
\end{pmatrix}$ generate $\SL(2,2^m)$. We shall prove that the matrices
$$
    M_1 := \begin{pmatrix}
    1 & 1 & 0\\
    0 & 1 & 0\\
    0 & 0 & 1
    \end{pmatrix},
    M_2 := \begin{pmatrix}
    1 & 0 & 0\\
    0 & 1 & x\\
    0 & 0 & 1
    \end{pmatrix},
    M_3 := \begin{pmatrix}
    1 & 0 & 0\\
    0 & 1 & 0\\
    x & 0 & 1
    \end{pmatrix}
$$
generate $\SL(3,2^m)$. Denoting
$$
    N_1 := M_1^{-1}M_2^{-1}M_1M_2
    = \begin{pmatrix}
    1 & 0 & x\\
    0 & 1 & 0\\
    0 & 0 & 1
    \end{pmatrix}, \quad
    N_2 := M_2^{-1} M_3^{-1} M_2 M_3
    = \begin{pmatrix}
    1 & 0 & 0\\
    x^2 & 1 & 0\\
    0 & 0 & 1
    \end{pmatrix},
$$
we obtain
$$
    P_3 := N_2^{-1}N_1^{-1}N_2N_1 = \begin{pmatrix}
    1 & 0 & 0\\
    0 & 1 & x^3\\
    0 & 0 & 1
    \end{pmatrix}.
$$
Since
$$    P_3^{-1}M_3^{-1}P_3 M_3 = \begin{pmatrix}
    1 & 0 & 0\\
    x^4 & 1 & 0\\
    0 & 0 & 1
    \end{pmatrix},
$$
we can proceed as above to obtain $S(2,(0,1,x^5))$. We may repeat this process until we derive $S(2, (0,1,x^{2m+1})) = S(2,(0,1,x^2))$, which together with $M_2$ and
$$
    M_3^{-1}M_1^{-1} M_3 M_1
    = \begin{pmatrix}
    1 & 0 & 0\\
    0 & 1 & 0\\
    0 & x & 1
    \end{pmatrix}
$$
generate $\SL(2,2^m)$ acting on the last two coordinates. It is then easy to show that any transvection of the form $S(1,e^1 + ae^i)$ or $S(i,e^i + ae^1)$ for any $i = 2,3$ and any $a \in \GF(2^m)$ can be generated. Thus, the whole special linear group is generated.\\
\\
We now prove the inductive step. More specifically, we show that $\SL(n,q)$ is generated by the following set of instructions:
$$
    \{S(i,v^i) : 1 \le i \le n-2\} \cup \{S(n-1,e^{n-1} + x e^n), S(n,e^n + x e^1)\}.
$$
Again, we can easily generate $S(1,e^1 + x e^n)$ and hence $\SL(3,2^m)$ acting on the coordinates $1$, $n-1$, and $n$. In particular, $S(n-1,e^{n-1} + x e^1)$ is generated and by induction hypothesis we obtain $\SL(n-1,2^m)$ acting on the first $n-1$ coordinates. Finally, any transvection of the form $S(n,e^n + ae^i)$ or $S(i,e^i + ae^n)$ for any $i \le n-1$ and any $a \in \GF(2^m)$ can be easily generated. Thus, the whole special linear group is generated.
\end{proof}

%\subsection{Smallest set of generating instructions for GL}

We now turn to the general linear group.

\begin{theorem} \label{th:GL}
The group $\GL(n,q)$ is generated by $n$ instructions for any $n$ and any prime power $q$.
\end{theorem}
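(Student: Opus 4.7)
The plan is to lift Theorem \ref{th:SL} by modifying a single instruction so that its determinant becomes a primitive element $\alpha$ of $\GF(q)^\times$. Since $\GL(n,q)/\SL(n,q) \cong \GF(q)^\times$ is generated by the coset of any element of determinant $\alpha$, it suffices to exhibit $n$ instructions generating a subgroup $H$ with $\SL(n,q) \subseteq H$ and $\det H = \GF(q)^\times$; the second condition is immediate once one instruction has determinant $\alpha$, so the real task is to secure $\SL(n,q) \subseteq H$.

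For the inductive step $n \geq 3$, I would take the set
\[
\mathcal{G} := \{S(i, v^i) : 1 \leq i \leq n-2\} \cup \{S(n-1, e^{n-1} + xe^n),\, M\}, \quad M := S(n, \alpha e^n + ye^1),
\]
obtained from the $\SL$-generators of Theorem \ref{th:SL} by multiplying the $(n,n)$-entry of the last instruction by $\alpha$. The crucial step is the commutator computation
\[
[S(1, v^1), M] \;=\; S\bigl(n,\, e^n - \alpha^{-1} y\, e^2\bigr),
\]
which produces a non-trivial transvection in $H \cap \SL(n,q)$. Substituting this transvection for $S(n, e^n + ye^1)$ in the $\SL$-generating set merely swaps the reference coordinate ($e^1 \to e^2$) and rescales ($y \to -\alpha^{-1}y$); the chain of conjugations used in Theorem \ref{th:SL} to produce all transvections, namely identities of the shape
\[
S(n, e^n + ae^i) = S(n, e^n - ae^j)\, S(j, e^j - e^i)\, S(n, e^n + ae^j)\, S(j, e^j + e^i),
\]
is symmetric in $j$ and insensitive to the particular nonzero scalar, so the same induction delivers $\SL(n,q) \subseteq H$ and hence $H = \GL(n,q)$.

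For the base case $n=2$ I would mirror the two subcases of Theorem \ref{th:SL}. For $q$ odd, take $A := S(1, \alpha e^1 + xe^2)$ and $B := S(2, ye^1 + e^2)$ with the same genericity conditions on $x, y$; for $q = 2^m$ with $m \geq 2$, take $A := S(1, e^1 + xe^2)$ and $B := S(2, ye^1 + \alpha e^2)$. In each case I would work through the Dickson/Wilman--Moore classification of maximal subgroups of $\GL(2,q)$ (pulled back from $\PSL(2,q)$ through the center): Borel maximals fail since $A$ and $B$ fix distinct $1$-spaces, subfield subgroups fail by genericity of the trace of $AB$, and the exceptional $\Alt(4)$, $\Sym(4)$, $\Alt(5)$ and dihedral maximals are ruled out by element orders. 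The even-characteristic subcase is actually cleaner since the $\alpha$-twist makes $B$ a non-involution, instantly excluding the dihedral maximals of $\GL(2,2^m)$ of order $2(q\pm 1)$. The case $q=2$ is subsumed by Theorem \ref{th:SL} since there $\GL(2,2) = \SL(2,2)$.

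The main obstacle I anticipate is the base case, and within it the even-characteristic subcase: one must identify the maximal subgroups of $\GL(2,2^m)$ (rather than $\SL$) and adapt the $\SL(2,2^m)$ analysis of Theorem \ref{th:SL} to a generator of nontrivial odd order. The inductive step itself reduces to the commutator identity above together with the symmetry of the chain of conjugations in the proof of Theorem \ref{th:SL}, and should present no real difficulty.
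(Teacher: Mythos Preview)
Your commutator computation $[S(1,v^1),M]=S(n,\,e^n-\alpha^{-1}y\,e^2)$ is correct, but the inductive step has a genuine gap: this transvection lands on coordinate~$2$, not~$1$, and that is \emph{not} a harmless relabelling. The conjugation identity you quote needs the shears $S(j,e^j\pm e^i)$; with $j=2$ you can reach every $i\in\{3,\dots,n-1\}$ because $S(2,e^2+e^i)$ is obtainable from $\{S(k,v^k):1\le k\le n-2\}$, but you cannot reach $i=1$, since $S(2,e^2\pm e^1)$ is nowhere available. Consequently the set $\{S(i,v^i):1\le i\le n-2\}\cup\{S(n-1,e^{n-1}+xe^n),\,S(n,e^n-\alpha^{-1}y\,e^2)\}$ does not generate $\SL(n,q)$: assuming the pair $(x,-\alpha^{-1}y)$ still meets the $\SL(2,q)$ hypothesis of Theorem~\ref{th:SL} (itself a point you must arrange by re-choosing $x,y$), these instructions generate $\SL(n-1,q)$ on coordinates $\{2,\dots,n\}$ together with $S(1,e^1+e^2)$, and that is exactly the proper parabolic $\bigl\{\begin{smallmatrix}1&*\\0&g\end{smallmatrix}\bigr\}$ with $g\in\SL(n-1,q)$. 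To escape the parabolic you must use $M$ itself a second time rather than only its commutator; for instance $M\cdot(D^{-1}MD)^{-1}=S(n,\,e^n+(1-\alpha^{-1})y\,e^1)$ with $D=\mathrm{diag}(1,\alpha^{-1},1,\dots,1,\alpha)\in\SL(n-1,q)$ does the job, but this is precisely the missing idea.

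For comparison, the paper avoids this difficulty by taking a rather different route. It splits by the parity of $q$. For $q$ odd and $n>2$ it twists the row-$(n{-}1)$ instruction instead, setting $T=S(n-1,\,a e^{n-1}+b e^n)$ with $a$ primitive and $b=(a-1)x/2$ chosen precisely so that $T^{(q-1)/2}=S(n-1,\,-e^{n-1}-x e^n)$; this recovers (the negative of) the original $\SL$-generator directly from a \emph{power} of $T$, after which one short diagonal conjugation finishes the job. For $q$ even it does not appeal to the Dickson classification at all: the base case $\GL(2,2^m)=\langle S(1,\alpha e^1+e^2),\,S(2,\alpha e^1+e^2)\rangle$ is verified by an explicit chain of products, and a separate induction handles $n\ge 3$. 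Your uniform commutator strategy is conceptually attractive, but as written it underestimates the work needed to recover the first coordinate, and the $n=2$ case you flag as the ``main obstacle'' is in fact the easier half.
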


\begin{proof}
The proof is split into two parts, depending on the parity of $q$; the even part goes by induction on $n$.
If $q$ is even, we prove that $\GL(n,2^m)$ is generated by the $n$ instructions 
$$
	\{S(i,v^i) :2 \le i \le n-1\} \cup \{S(1,\alpha e^1 + e^2), S(n,\alpha e^1 + e^n)\}
$$ 
for any primitive element $\alpha$. Since $\mbox{det}(S(1,\alpha e^1 + e^n)) = \alpha$, we only need to generate the special linear group. 

For $n=2$, denote $M_i = S(i,(\alpha,1))$ for $i=1,2$. Then we can generate the transposition matrix as follows:
$P = \begin{pmatrix} 0 & 1\\ 1 & 0 \end{pmatrix} = M_1 M_2 M_1^{-1}.$
Since $S(1,(1,\alpha)) = P M_2 P$, we easily generate $S(1,(\alpha,0)) = M_1^{-1} S(1,(1,\alpha)) M_1^2.$
Any transvection $S(1,(1,\alpha^k))$ can then be expressed as
$$
    \begin{pmatrix} 1 &\alpha^k\\ 0 & 1 \end{pmatrix} = 
	\begin{pmatrix} \alpha^{k-1} & 0\\ 0 & 1 \end{pmatrix}
	\begin{pmatrix} 1 &\alpha\\ 0 & 1 \end{pmatrix}
	\begin{pmatrix} \alpha^{-k+1} &0\\ 0 & 1 \end{pmatrix},
$$
and any other transvection is obtained by conjugating by $P$.\\
\\
We now prove the inductive part. We can easily generate $S(1,\alpha e^1 + e^n)$, which combined with $S(n,e^n + \alpha e^1)$ generates $\GL(2,q)$ acting on the coordinates $1$ and $n$. In particular, we obtain the matrix $Q = \begin{pmatrix} 0 & 1\\ \alpha & 0 \end{pmatrix}$, and
$$
    S(n-1,\alpha e^1 + e^{n-1}) = Q^{-1} S(n,\alpha e^1 + e^n) Q.
$$
We then have the complete set of generators for $\GL(n-1,q)$ acting on coordinates $1$ to $n-1$. It is then easy to prove that any transvection of the form $S(i,e^i + ae^n)$ and $S(n,e^n + ae^i)$ for any $1 \le i \le n-1$ and any $a \in \GF(q)$ can be generated.

If $q$ is odd and $n=2$, consider the matrices
$A:=\left(\begin{array}{cc}
1&1\\
0&1
\end{array}\right)$, $B:=\left(\begin{array}{cc}
1&0\\
1&x
\end{array}\right)$ where $x\in\mbox{GF}(q)$ is not contained in any proper subfield.
Arguments analogous to those used in the SL$(2,q)$ case show that $\langle A,A^B\rangle=\mbox{SL}(2,q)$.\\
\\
If $n>2$, we rely on the proof of Theorem \ref{th:SL} for $q$ odd. We know that there exist $x,y \in \GF(q)$ such that $\SL(n,q)$ is generated by
$$
    \{S(i,v^i) : 1 \le i \le n-2\} \cup \{S(n-1,e^{n-1} + xe^n), S(n, e^n + ye^1)\}.
$$
Let $a$ be a primitive element of $\GF(q)$ and $b := (a-1)x/2$. We shall prove that replacing the instruction updating coordinate $n-1$ by $T = S(n-1,ae^{n-1} + be^n)$ in the set above yields a generating set for $\GL(n,q)$. We only need to show that $S(n-1,e^{n-1} + xe^n)$ is generated. We have $T^{(q-1)/2} = S(n-1,-e^{n-1} - xe^n)$ and hence we can easily generate $S(1,e^1 + xe^n)$ and the whole of $\SL(2,q)$ acting on coordinates $1$ and $n$. In particular, we obtain $Q = \mbox{diag}(2^{-1},1,\ldots,1,2)$, whence
$$
    S(n-1,e^{n-1} + x e^n) = S(n-1,-e^{n-1} - xe^n) Q^{-1} S(n-1,-e^{n-1}-xe^n) Q.
$$
Only displaying rows and columns indexed $1,n-1,n$ the equation above reads
$$
	\begin{pmatrix} 1 & 0 & 0\\ 0 & 1 & x\\ 0 & 0 & 1 \end{pmatrix} =
	\begin{pmatrix} 1 & 0 & 0\\ 0 & -1 & -x\\ 0 & 0 & 1 \end{pmatrix}
	\begin{pmatrix} 2 & 0 & 0\\ 0 & 1 & 0\\ 0 & 0 & 2^{-1} \end{pmatrix}
	\begin{pmatrix} 1 & 0 & 0\\ 0 & -1 & -x\\ 0 & 0 & 1 \end{pmatrix}
	\begin{pmatrix} 2^{-1} & 0 & 0\\ 0 & 1 & 0\\ 0 & 0 & 2 \end{pmatrix}.
$$
\end{proof}

We conclude this section by noticing that Theorems \ref{th:SL} and \ref{th:GL} have implication on some classical semigroups of matrices. Denote the semigroup of singular matrices in $\GF(q)^{n \times n}$ as $\mbox{Sing}(n,q)$ and consider the general linear semigroup (also called full linear monoid \cite{Okn98}) and special linear semigroup:
\begin{align*}
	\mbox{GLS}(n,q) &= \GL(n,q) \cup \mbox{Sing}(n,q),\\
	\mbox{SLS}(n,q) &= \SL(n,q) \cup \mbox{Sing}(n,q).
\end{align*}
Note that $\mbox{Sing}(n,q)$ is not an internally computable semigroup. Indeed, the kernel of any singular instruction matrix only contains vectors with Hamming weight equal to zero or one. Thus any matrix whose kernel forms a code with minimum distance at least two cannot be computed by a program only consisting of singular instructions. For instance, the square all-ones matrix of any order over any finite field cannot be computed in that fashion. 

However, according to Theorems 6.3 and 6.4 in \cite{Rus95}, any generating set of $\GL(n,q)$ ($\SL(n,q)$ respectively) appended with any matrix of rank $n-1$ in $\mbox{Sing}(n,q)$ generates $\mbox{GLS}(n,q)$ ($\mbox{SLS}(n,q)$ respectively). Since any singular instruction has rank $n-1$, we conclude that these semigroups are internally computable, and in particular $\mbox{GLS}(n,q)$ is generated by $n+1$ instructions, while $\mbox{SLS}(n,q)$ is generated by $n+1$ instructions unless $q = 2^m$ and $n=2$, where it is generated by four instructions.

%\bibliographystyle{IEEEtran}
%\bibliography{g}

% Generated by IEEEtran.bst, version: 1.13 (2008/09/30)

\end{document}